\numberwithin{equation}{section}
\renewcommand{\section}{\@startsection{section}{1}{0pt}{20pt}{6pt}{\large\bf}}
\renewcommand{\@seccntformat}[1]{\csname the#1\endcsname.\ }
\def\footnoterule{\kern -3pt \hrule width 2.7 true cm \kern 2.6pt}
\def\ni{\noindent}
\def\vs{\vspace}
\def\hs{\hspace}
\def\EE{\mathbb E}
\def\QQ{\mathbb Q}
\def\R{I\!\!R}
\def\L{I\!\!L}
\def\wh{\widehat}
\newcommand{\p}{\! +\! }
\newcommand{\m}{\! -\! }
\newtheorem{theorem}{Theorem}[section]
\newtheorem{definition}[theorem]{Definition}
\begin{document}

\title{\textbf{On the valuation of multiple reset options: integral equation approach}}
\author{Nazym Azimbayev\thanks{Department of Computational Mathematics and Cybernetics, Lomonosov Moscow State University, Moscow, Russia; \texttt{nazym.azimbayev@cs.msu.ru}} \thanks{The National Bank of Kazakhstan, Nur-Sultan, Kazakhstan}\quad \& Yerkin Kitapbayev\thanks{Department of Mathematics, NC State University, Raleigh NC, USA; \texttt{ykitapb@ncsu.edu}}}
\maketitle


{\par \leftskip=2.6cm \rightskip=2.6cm \footnotesize

 In this paper, we study a pricing problem of the multiple reset put option, which allows the holder to reset several times a current strike price to obtain an at-the-money European put option. 
We formulate the pricing problem as a multiple optimal stopping problem, then reduce it to a sequence of single optimal stopping problems and study the associated free-boundary problems.
We solve this sequence of problems by induction in the number of remaining reset rights and exploit probabilistic arguments such as local time-space calculus on curves. As a result, we characterize each optimal reset boundary as the unique solution to a nonlinear integral equation and derive the reset premium representations for the option prices. 
We propose that the multiple reset options can be used as cryptocurrency derivatives and an attractive alternative to standard European options due to the extreme volatility of underlying cryptocurrencies.

\par}


\footnote{{\it Mathematics Subject Classification 2010.} Primary
91G20, 60G40. Secondary 60J60, 35R35, 45G10.}

\footnote{{\it Key words and phrases:} multiple reset option, shout option,   multiple optimal stopping,
geometric Brownian motion, free-boundary problem,
local time, integral equation, cryptocurrencies, bitcoin.}


\vs{-18pt}

\vs{-18pt}

\section{Introduction}


Let us imagine an investor who holds a standard European call or put option with strike price $K$ and maturity date $T$. Then there are at least two possible scenarios when the holder can feel a regret: 1) there is a period before $T$ when underlying asset price movements are favorable for him; however, he cannot early exercise his option, and then the asset price will turn back and at time $T$ he gets small or zero payoff; 2) the option becomes deep out-of-the-money, and the asset price is below $K$ for the call (above $K$ for the put) option, and most likely he gets zero payoff at time $T$. In the last two decades, the options with shouting and reset features have been introduced and studied, addressing these unfavorable scenarios. They can be divided into the two groups of options: 1) shout (call or put) option, which allows the holder to lock the profit at some endogenous time $\tau$  and then at time $T$ get the maximum between two intrinsic values at $\tau$ and $T$; 2) reset (call or put) option gives to the investor the right to reset the strike $K$ to the current price, i.e., to substitute the current out-of-the-money option to the at-the-money one. The first group, i.e., shout options, allows the investor to lock the profit while having the opportunity of payoff increase at $T$. In contrast, the second one increases the chances to get a positive payoff at $T$. The pricing problem for both options can be formulated as optimal stopping problems where stopping times represent shouting or reset times, respectively. They have both European (since the payoff is known at $T$ only) and American features (due to early `shouting' or `reset' opportunities).

Below we provide a literature review on the reset and shout options. These options might have both single and multiple opportunities to shout or reset. However, so far, the research has been focused primarily on single shout and reset options.
The origin of the shout option goes to the paper \cite{Thomas} and brief analysis can be found in textbooks \cite{Hull} and \cite{Wilmott} where the binomial tree method is offered for pricing the option. There are several papers where these options were thoroughly studied from both theoretical and numerical points of view. In series of works \cite{WFV-1}-\cite{WFV-3} several sophisticated numerical schemes have been developed to price the reset and shout options. The comprehensive analysis has been performed in \cite{Kwok-1}, where the reset put option was formulated as a free-boundary problem,  shouting premium for the option price and recursive integral equation for the optimal shouting boundary were obtained. In \cite{Kwok-2}, authors studied the multiple reset put option problem, i.e., the buyer has several rights to reset a strike. 
The numerical experiments and analysis have been provided using the binomial method approach. Several monotonicity properties for option prices and optimal reset boundaries were established through theoretical arguments and numerical experiments.

Later, a thorough theoretical analysis was provided in \cite{Dai-1}, where PDE and variational inequality approaches were applied to show the existence and uniqueness of a solution to the free-boundary problem associated with the reset put option problem. Also, monotonicity and regularity of the optimal shouting boundary have been shown in some cases. Then in \cite{Alobaidi-1},
using a Laplace transform, the Fredholm integro-differential equations for optimal shouting boundaries of shout call and put options were obtained. Finally, in \cite{Goard} the formal series expansions have been discovered for the option price and optimal shouting boundaries of the reset put and shout call options.

This paper is devoted to the pricing of the multiple reset put option. The contribution of the paper is twofold. Firstly, we study the single reset put option and reduce it to a free-boundary problem, and then we tackle the latter using mostly probabilistic arguments such as Markov property and local time-space calculus (see \cite{Pe-1}). We provide the probabilistic proof of the monotonicity
of optimal reset boundary in some situations. 
 We then obtain the same reset premium representation for the option price
and recursive integral equation for optimal reset boundary as in \cite{Kwok-1}. Secondly and most importantly,
we complement the work \cite{Kwok-2} and study the multiple reset option using local time-space formula and by induction in number $n$ of reset rights. To the best of our knowledge,
this is the first paper where the reset premium representations for the option prices with multiple rights and the recursive integral equations for each optimal reset boundaries were obtained. 
This result extends the results for the single reset option. Thirdly, we describe the numerical algorithm for pricing multiple reset options and plotting optimal reset boundaries. Fourthly,  the paper also recalls the put-call reset-shout parity, i.e., the reset put option problem is equivalent to shout call one in the sense that their optimal strategies coincide and their prices have a simple relationship. The same fact is true for reset call and shout put options, which was also observed in \cite{Kwok-2} and \cite{Goard}.

Finally, we also note that these contracts can be attractive for investors in cryptocurrency markets. The high volatility of cryptocurrencies calls for more flexible option contracts, which could give new opportunities to investors even if they see sudden jumps in the price of the underlying cryptocurrency.  It then seems natural to consider reset strike or shout options in this market as an alternative to plain vanilla call/put options.

The paper is organised as follows. 
In Section 2 we formulate the multiple reset put option problem as a multiple optimal stopping problem, which we reduce to the sequence of single optimal stopping problems.
In Section 3 we provide a put-call parity between reset and shout options.
Section 4 is devoted to the single reset put option. In Section 5 we derive
reset premium representation for the price of the multiple reset option and characterize the optimal reset boundaries
as the unique solution to nonlinear integral equations. Finally, Section 6 provides numerical algorithms for numerical solutions to the multiple reset option problem.

\vs{6pt}

\section{Formulation of the problem}

Let us suppose the probability space $(\Omega,\cal{F},\QQ)$, where $\QQ$ is the risk-neutral measure. In this paper, we study the multiple reset put option problem. The underlying asset price $X$ follows geometric Brownian motion 
\begin{equation} \label{SDE} \hs{6pc}
 dX_{t}=(r\m \delta)X_{t}\,dt+\sigma X_{t}\,dB_{t}\;\;\; (X_0=x)
 \end{equation}
  where $B$ is a $\QQ$-standard Brownian motion started at zero, $r>0$ is the interest rate, $\delta\ge 0$ is the dividend yield and $\sigma>0$ is the constant volatility coefficient. The solution $X$ to the stochastic differential equation \eqref{SDE} is given by
\begin{equation} \label{model} \hs{6pc}
 X_t=x \exp\Big((r\m\delta\m\sigma^2/2)t+\sigma B_t\Big)
 \end{equation}
 for $t\ge 0$, $x>0$.

By definition of the multiple reset put option with $n$ rights the payoff at maturity time $T$ is the following: if the buyer ``resets'' at the sequence of endogenous times $\tau_1<....\le \tau_n\le T$ (i.e. at least $\tau_1<T$, others may equal $T$), he obtains
 $(X_{\tau_1}\vee...\vee X_{\tau_n}-X_T)^+$ and if the buyer does not reset until $T$ (i.e. $\tau_1=T$), his payoff equals $(K-X_T)^+$, where $K>0$ is the original strike price. Hence the multiple reset put option allows to obtain sequentially at-the-money put options by reseting the strike at times $\tau_i$, $i=1,...,n$. Clearly, one should reset her/his $i$-th right only when $X_t >K_{i-1}$ with $K_i=X_{\tau_i}$, $i=1,...,n-1$ and $K_0=K$. This option of an European type since the payoff is delivered at $T$, however it has an ``American''-style feature of an early action - reseting.

 Now let us turn to the valuation of multiple reset put option. If the holder resets (or does not reset and $\tau_1=T$) at the sequence of stopping times $\tau:=(\tau_1,...,\tau_n)$ with respect to the natural filtration of $X$, then the payoff at maturity $T$  is given by
 \begin{equation} \label{payoff-n} \hs{6pc}
(X_{\tau_1}\vee...\vee X_{\tau_n}\vee K\m X_T)^+.
 \end{equation}
Thus the arbitrage-free price of the multiple reset put option at time $t\in[0,T)$ as a value function of the following optimal stopping problem
\begin{equation} \label{problem-n} \hs{6pc}
V^{RP}_n(t,x;K)=e^{-r(T-t)}\sup \limits_{t\le\tau\le T} \EE_{t,x}\left[(X_{\tau_1}\vee...\vee X_{\tau_n}\vee K\m X_T)^+\right]
 \end{equation}
where the supremum is taken over the set of sequences of stopping times $\{\tau=(\tau_1,...,\tau_n): 0\le \tau_1\le...\le\tau_n\le T\}$, the expectation $\EE_{t,x}$ is taken under $\QQ$ and $X_t=x$, and we include the discount factor $e^{-rT}$ since the payoff is delivered at $T$. We also emphasize the dependence of
$V^{RP}_n$ on the current strike level $K$.
It is important to note that since the terminal payoff function \eqref{payoff-n}  is not known at reset times $\tau_i$, the multiple optimal stopping problem \eqref{problem-n} falls into the class of optimal prediction problems. Hence we  need to reduce this problem to a standard optimal stopping problem with an adapted payoff function and also to reformulate it as the sequence of single stopping problems.
For this, let us now rewrite the payoff in the following way
 \begin{align} \label{payoff-n-2} \hs{0pc}
\max(X_{\tau_1}\vee...\vee X_{\tau_n}\vee K\m X_T,0)=\max(X_{\tau_2}\vee...\vee X_{\tau_n}\vee (X_{\tau_1}\vee K)\m X_T,0).
 \end{align}
 The equality \eqref{payoff-n-2} shows that the payoff  of the option with remaining $n$ rights and strike $K$ is the same as the payoff of the option with remaining $n-1$ rights left and the strike $X_{\tau_1}\vee K$.  Now by conditioning on $X_{\tau_1}$ we reduce \eqref{problem-n} to
 \begin{equation} \label{problem-n-2} \hs{6pc}
V^{RP}_n(t,x;K)=\sup \limits_{t\le\tau_1\le T} \EE_{t,x} \left[e^{-r(\tau_1-t)} V^{RP}_{n-1}(\tau_1,X_{\tau_1};X_{\tau_1}\vee K)\right].
 \end{equation}
 Therefore to solve the problem \eqref{problem-n} we need to tackle $n$ single optimal stopping problems sequentially starting from $n=1$ and $V^{RP}_0$ being the European put option price.
 This reduction is very natural as it says that the price of the reset option with $n$ rights can be obtained as the expected discounted value at the first optimal  reset time $\tau_1^*$ of the reset option with $n-1$ remaining  rights. The reduction \eqref{problem-n-2} was also stated in \cite{Kwok-2}. The reduction of multiple optimal stopping problems to the sequence of single optimal stopping problems is quite standard in the literature of swing options, see e.g. \cite{CT} and the recent work on finite horizon \cite{DeKit}. 
 The pricing of the swing options on finite horizon are different from reset/shout options due to the presence of so-called refracting period, which is the minimal period between two exercise times. 

 Let us now write the problem as
 \begin{equation} \label{problem-n-3} \hs{6pc}
V^{RP}_n(t,x;K)=\sup \limits_{t\le\tau\le T}\EE_{t,x}\left[ e^{-r(\tau-t)} G_n(\tau,X_{\tau})\right]
 \end{equation}
 for $n\ge 1$, $0\le t\le T$ and $x,K>0$ where the payoff function $G_n$ reads inductively
 \begin{equation} \label{payoff-n-3} \hs{6pc}
 G_n(t,x):=V^{RP}_{n-1}(t,x;x\vee K)
 \end{equation}
and
 \begin{equation} \label{problem-0} \hs{6pc}
V^{RP}_{0}(t,x;K)=V^e(t,x;x\vee K)=\EE_{t,x} \left[e^{-r(T-t)}(x\vee K\m X_{T})^+\right]
 \end{equation}
 is the price of European put option at time $t\in[0,T)$ and asset price $x>0$, with strike $x\vee K$ and expiry date $T$.
\vs{6pt}

\section{Parity between shout and reset options}

In this section we recall definitions of reset call option, shout put and call options with multiple rights, and show the parity between them. 

\begin{definition} Reset call option has the following payoff at maturity time $T$: if the buyer `resets' at sequence of times $\tau_1<....\le \tau_n<T$ he gets $(X_T -X_{\tau_1}\wedge...\wedge X_{\tau_n})^+$ and if the buyer does not `reset' until $T$ his payoff equals $(X_T -K)^+$, where $K>0$ is the original strike price.
\end{definition}

\begin{definition} Shout call option has the following payoff at maturity time $T$: if the buyer `shouts' at sequence of times $\tau_1<....\le \tau_n<T$ he gets $\max(X_{\tau_1}-K,..., X_{\tau_n}-K)$ and if the buyer does not `shout' until $T$ his payoff equals $(X_T -K)^+$, where $K>0$ is the strike price.
\end{definition}

\begin{definition} Shout put option has the following payoff at maturity time $T$:  if the buyer `shouts' at sequence of times $\tau_1<....\le \tau_n<T$ he gets $\max(K-X_{\tau_1},..., K-X_{\tau_n})$ and if the buyer does not `shout' until $T$ his payoff equals $(K-X_T )^+$, where $K>0$ is the strike price.
\end{definition}

  If the holder resets (or shouts) at sequence of times $\tau_1\le....\le \tau_n<T$, then the payoff for reset call option, and shout call and put options with $n$ rights at maturity $T$ under risk-neutral measure $\QQ$ is given by, respectively
 \begin{align}
 \label{resetcall}\hs{6pc}
&(X_T -X_{\tau_1}\wedge\ldots\wedge X_{\tau_n}\wedge K)^+\\
 \label{shoutcall}
 &\max(X_{\tau_1}-K,\ldots, X_{\tau_n}-K,X_T-K,0)\\
\label{shoutput}
&\max(K-X_{\tau_1},\ldots, K-X_{\tau_n},K-X_T,0)
 \end{align}
and their arbitrage-free prices
\begin{align}
\label{resetcallprice} \hs{6pc}
&V^{RC}_n=e^{-rT}\sup \limits_{0\le\tau\le T}\EE\left[(X_T -X_{\tau_1}\wedge\ldots\wedge X_{\tau_n}\wedge K)^+\right]\\
 \label{shoutcallprice}
&V^{SC}_n=e^{-rT}\sup \limits_{0\le\tau\le T}  \EE\left[\max(X_{\tau_1}-K,\ldots, X_{\tau_n}-K,X_T-K,0)\right]\\
\label{shoutputprice}
&V^{SP}_n=e^{-rT}\sup \limits_{0\le\tau\le T}\EE\left[\max(K-X_{\tau_1},\ldots, K-X_{\tau_n},K-X_T,0)\right].
 \end{align}
Straightforward calculations show the parity between the prices of reset call and shout put options
\begin{align} \label{parity-1} \hs{6pc}
V^{RC}_n&=e^{-rT}\sup \limits_{0\le\tau\le T}\EE\left[\max(X_T -X_{\tau_1},\ldots,X_T-X_{\tau_n},X_T- K,0)\right]\\
&=e^{-rT}\sup \limits_{0\le\tau\le T}\EE\left[X_T-K+\max(K -X_{\tau_1},\ldots,K-X_{\tau_n},K-X_T,0)\right]\nonumber\\
&=X_0 e^{-\delta T}-K e^{-rT}+V^{SP}_n\nonumber
 \end{align}
and between the prices of reset put and shout call options
\begin{align} \label{parity-2} \hs{6pc}
V^{RP}_n &=e^{-rT}\sup \limits_{0\le\tau\le T}\EE\left[\max(X_{\tau_1}-X_T,\ldots,X_{\tau_n}-X_T,K-X_T,0)\right]\\
&=e^{-rT}\sup \limits_{0\le\tau\le T}\EE\left[K-X_T+\max(X_{\tau_1}-K,\ldots,X_{\tau_n}-K,X_T-K,0)\right]\nonumber\\
&=K e^{-rT}-X_0 e^{-\delta T}+V^{SC}_n\nonumber
 \end{align}
and the relationship between all of them
\begin{align} \label{parity-3} \hs{6pc}
V^{RP}_n -V^{SC}_n=V^{SP}_n-V^{RC}_n=K e^{-rT}-X_0 e^{-\delta T}.
 \end{align}
 Finally, we note that the optimal stopping times are the same for reset call and shout put as well as for reset put and shout call options.
\vs{6pt}

 \section{Single reset put option}

 Throughout the rest of the paper we focus on the reset put option and for notational convenience we use $V_n$ instead of $V_n^{RP}$. In this section we study the pricing of single reset put option, i.e., $n=1$ in \eqref{problem-n-3}
  \begin{equation} \label{problem-1-2} \hs{6pc}
V_1(t,x;K)=\sup \limits_{t\le\tau\le T}\EE_{t,x} \left[e^{-r(\tau-t)} G_1(\tau,X_{\tau})\right]
 \end{equation}
 for $t\in[0,T)$ and $x,K>0$, and where   \begin{equation} \label{payoff-1-2} \hs{6pc}
 G_1(t,x)=\EE_{t,x}\left[ e^{-r(T-t)}(x\vee K-X_{T})^+\right].
  \end{equation}
  We solve the problem  \label{problem-1} for fixed $K$ so
 we will use notation  $V_1(t,x)$ instead of $V_1(t,x;K)$ in this section.

  We will reduce the problem  \label{problem-1} into a free-boundary problem and the latter will be tackled in the next section using local time-space calculus (\cite{Pe-1}).
 First using that the gain function $G_1$ is continuous and standard arguments (see e.g. Corollary 2.9 (Finite horizon) with Remark 2.10 in \cite{PS}) we have that continuation and reset sets read
\begin{align} \label{C} \hs{5pc}
&C_1= \{\, (t,x)\in[0,T)\! \times\! [0,\infty):V_1(t,x)>G_1(t,x)\, \} \\[3pt]
 \label{D}&D_1= \{\, (t,x)\in[0,T)\! \times\! [0,\infty):V_1(t,x)=G_1(t,x)\, \}
 \end{align}
and the optimal stopping time in \eqref{problem-1-2} is given by
\begin{align} \label{OST} \hs{5pc}
\tau_1^*=\inf\ \{\ t\leq s\leq T:(s ,X_{s})\in D_1\ \}.
 \end{align}

1. We will now discuss some properties of the functions $G_1$ and $V_1$. The gain function $G_1$ can be simplified
\begin{align} \label{G-1} \hs{2pc}
G_1(t,x)&=xV^e (t,1;1)=:xV^e(t)\\
&=x \left[e^{-r(T-t)} \Phi\Big((\tfrac{\delta-r}{\sigma}\p\tfrac{\sigma }{2})\sqrt{T-t}\Big)- e^{-\delta(T-t)}\Phi\Big((\tfrac{\delta-r}{\sigma}\m\tfrac{\sigma} {2})\sqrt{T-t}\Big)\right]\nonumber
\end{align}
 for $x>K$, where $\Phi(\cdot)$ is the cumulative distribution function of the standard normal law and $V^e(t)$ is the price of European  put option at $t$ with $x=1$ and strike 1. On the other hand, $G_1$ equals to
 \begin{align} \label{G-2} \hs{3pc}
G_1(t,x)=\;&K e^{-r(T-t)} \Phi\Big(\tfrac{1}{\sigma\sqrt{T-t}}\big[\log(\tfrac{K}{x})\m(r\m\delta\m\tfrac{\sigma^2 }{2})(T\m t)\big]\Big)\\
&-x e^{-\delta(T-t)}\;\Phi\Big(\tfrac{1}{\sigma\sqrt{T-t}}\big[\log(\tfrac{K}{x})\m(r\m\delta\p\tfrac{\sigma^2} {2})(T\m t)\big]\Big)\nonumber
\end{align}
for $0<x\le K$ and this is exactly the European put option price $V^e(t,x;K)$ with strike $K$. We know that $x\mapsto V^e(t,x)$ is convex on $(0,\infty)$ for any $0\le t<T$ fixed so that $x\mapsto G_1(t,x)$ is convex on $(0,K)$ for every $t$ fixed. Since $G_1$ is linear in $x$ on $(K,\infty)$ in order to prove that $x\mapsto G_1(t,x)$ is convex on $(0,\infty)$ for every $t$ fixed it is enough to show  that  $(G_1)_x (t,K+)\ge (G_1)_x (t,K-)$. Using \eqref{G-1}, \eqref{G-2} and well-known expression for `delta' coefficient of the European put option $\Delta_p=\frac{\partial V^e}{\partial x}
=-\Phi\Big(\tfrac{1}{\sigma\sqrt{T-t}}\big[\log(\tfrac{K}{x})\m(r\m\delta\p\tfrac{\sigma^2 }{2})(T\m t)\big]\Big)$ we have that
\begin{align} \label{G-3} \hs{3pc}
(G_1)_x (t,K+)&=e^{-r(T-t)} \Phi\Big((\tfrac{\delta-r}{\sigma}\p\tfrac{\sigma }{2})\sqrt{T-t}\Big)- \Phi\Big((\tfrac{\delta-r}{\sigma}\m\tfrac{\sigma} {2})\sqrt{T-t}\Big)\\
\label{G-4}
(G_1)_x (t,K-)&=-\Phi\Big((\tfrac{\delta-r}{\sigma}\m\tfrac{\sigma}{2})\sqrt{T-t}\Big)
\end{align}
and thus it is clear that $(G_1)_x (t,K+)\ge (G_1)_x (t,K-)$.
Hence the function $x\mapsto G_1(t,x)$ is convex on $(0,\infty)$ for every $t$ fixed and thus using standard arguments we obtain that $x\mapsto V_1(t,x)$ is convex on $(0,\infty)$ as well.

\vs{6pt}

2. It will be useful for further analysis to calculate the expression $H_1:=\frac{\partial G_1}{\partial t} \p \L_X G_1\m rG_1$ for $(t,x)\in[0,T)\times(0,\infty)$ where
$\L_X =(r\m\delta)xd/dx+(\sigma^2/2) x^2 d^2/dx^2$ is the infinitesimal operator of $X$. Since we have that $G_1(t,x)=V^e(t,x)$ for $(t,x)\in[0,T)\times(0,K)$ and it is well-known that $V^e_t\p \L_X V^e\m rV^e=0$ for all $(t,x)\in[0,T)\times(0,\infty)$ then
\begin{align} \label{H-1} \hs{3pc}
H_1(t,x)=0 \quad \text{on}\quad [0,T)\times(0,K).
\end{align}
Now we consider set $\{(t,x)\in[0,T)\times(K,\infty)\}$ and there $G_1$ is given by \eqref{G-1} so that  we have
\begin{align} \label{H-4} \hs{-0.5pc}
H_1(t,x)=x\left[(r\m \delta) e^{-r(T-t)}\Phi\big((\tfrac{\delta-r}{\sigma}\p\tfrac{\sigma }{2})\sqrt{T\m t}\big)-\frac{\sigma}{2\sqrt{T-t}} e^{-r(T-t)}\varphi\big((\tfrac{\delta-r}{\sigma}\p\tfrac{\sigma }{2})\sqrt{T\m t}\big)\right]
\end{align}
for $(t,x)\in[0,T)\times(K,\infty)$ and where $\varphi(\cdot)$ is the probability density function of the standard normal law.
 For convenience we define
\begin{align} \label{H-5} \hs{1pc}
h_1(t)=& (V^e(t))'\m \delta V^e(t)\\
=&(r\m \delta) e^{-r(T-t)}\Phi\big((\tfrac{\delta-r}{\sigma}\p\tfrac{\sigma }{2})\sqrt{T\m t}\big)-\frac{\sigma}{2\sqrt{T-t}} e^{-r(T-t)}\varphi\big((\tfrac{\delta-r}{\sigma}\p\tfrac{\sigma }{2})\sqrt{T\m t}\big)\nonumber
\end{align}
for $t\in[0,T)$ so that $H_1(t,x)=x h_1(t)$ for $(t,x)\in[0,T)\times(K,\infty)$.
\vs{2pt}

By straightforward calculations it can be seen from \eqref{H-5} that $h_1(t)\downarrow -\infty$ as $t\rightarrow T$ and $h_1(0)>0$ as $T\rightarrow\infty$. Moreover,
there exists large enough $T>0$ and $t_*<T$ such that $h_1$ is increasing on $[0,t_*)$ and is decreasing on $(t_*,T)$. Then for this size of $T$
there is a point $t^*>t_*$ with $h_1>0$ on $[0,t^*)$ and $h_1\le 0$ on $[t^*,T)$. We note that for typical values of parameters $(T, r, \delta, \sigma)$ we have that $h_1$ is decreasing and negative on $[0,T)$ and thus positive points $t_*$ and $t^*$ do not exist.
\vs{6pt}

3. Below we derive some insights into the structure of reset set $D_1$.
Using Ito-Tanaka's formula and \eqref{H-4} with \eqref{G-3}-\eqref{G-4} we have
\begin{align} \label{Tanaka-1} \hs{1pc}
 \EE_{t,x}\left[ e^{-r(\tau-t)}G_1(\tau,X_\tau)\right]=\;&G_1(t,x)+\EE_{t,x}\left[ \int_t^\tau e^{-r(s-t)}h_1(s)X_s I(X_s \ge K)ds\right]\\
 &+\frac{1}{2} \EE_{t,x}\left[ \int_t^\tau e^{-r(s-t)} \Phi\Big(-\left(\tfrac{r}{\sigma}\m\tfrac{\sigma }{2}\right)\sqrt{T\m  s}\Big)d\ell^{K}_s (X)\right]\nonumber
 \end{align}
for $(t,x)\in[0,T)\times(0,\infty)$ and any stopping time $\tau$ of $X$, where $(\ell^K_s(X))_{s\ge t}$ is the local time process of $X$ at level $K$.
The formula \eqref{Tanaka-1} shows that for $x<K$ the first integrand on the right-hand side vanishes and the buyer should continue due to
the non-negative integral with respect to local time at $K$. When $x>K$ and the integrand $h_1$ is
positive, i.e. $t<t^*$, then it is again optimal to wait. Now if
$x>K$ and $h_1(t)\le 0$, i.e. $t\ge t^*$, the only incentive to wait is to accumulate a positive local time at $K$ in future up to $T$.
However this makes sense only for asset prices above but close to $K$ and for large $x$ one should reset the strike immediately due to very negative $ h_1 X$.
To conclude, we obtain that it is optimal to reset only when $t\ge t^*$ (if $t^*$ exists as described above) and large enough $x$ above $K$.
\vs{2pt}

Now we can describe the structure of the reset set $D_1$.
Namely, from the fact that it is not optimal to exercise the reset put option below $K$, definitions \eqref{C} and \eqref{D}, convexity of $V_1$ and linearity of $G_1$ above $K$ it follows that
there exists an optimal reset boundary $b_1:[0,T]\rightarrow \R$ such that
\begin{align} \label{OST-2} \hs{5pc}
&\tau_{1}^*=\inf\ \{\ t\leq s\leq T:X_{s}\ge b_1(s) \}
 \end{align}
is optimal in \eqref{problem-1-2} and $b_1(t)>K$ for $t\in[0,T)$. Also $b_1\equiv =+\infty$ on $[0,t^*)$  if $t^*$ exists. It is clear that if $x>K$ and $t<T$ is sufficiently close to $T$ then it is optimal to reset
immediately (since $h_1<0$ near $T$ and the profit obtained from the local time term at $K$ cannot compensate the cost of getting
there due to the lack of remaining time). This shows that $b_1(T-)=K$.

\vs{6pt}

4. Standard Markovian arguments (see \cite{PS}) lead to the following free-boundary problem (for the value
function $V_1=V_1(t,x)$ and the optimal reset boundary $b_1=b_1(t)$ to be determined):
\begin{align} \label{PDE} \hs{5pc}
&\frac{\partial V_1}{\partial t} \p\L_X V_1\m rV_1=0 &\hs{-30pt}\text{in}\;  C_1\\
\label{IS}&V_1(t,b_1(t))=G_1(t,b_1(t)) &\hs{-30pt}\text{for}\; t\in[0,T)\\
\label{SF}&\frac{\partial V_1}{\partial x} (t,b_1(t))=\frac{\partial G_1}{\partial x}(t,b_1(t)) &\hs{-30pt}\text{for}\; t\in[0,T) \\
\label{TC}&V_1(T,x)=(K\m x)^+ &\hs{-30pt}\text{for}\; x\in(0,\infty)\\
\label{FBP1}&V_1(t,x)>G_1(t,x) &\hs{-30pt}\text{in}\; C_1\\
\label{FBP2}&V_1(t,x)=G_1(t,x) &\hs{-30pt}\text{in}\; D_1
\end{align}
where the continuation set $C_1$ and the reset set $D_1$ are given by
\begin{align} \label{C-1} \hs{5pc}
&C_1= \{\, (t,x)\in[0,T)\! \times\! (0,\infty):x<b_1(t)\, \} \\[3pt]
 \label{D-1}&D_1= \{\, (t,x)\in[0,T)\! \times\! (0,\infty):x\ge b_1(t)\, \}
 \end{align}
with  $b_1=+\infty$ on $[0,t^*)$ if $t^*>0$ exists. It can be shown that this free-boundary problem has a unique solution $V_1$ and $b_1$ which coincide with the
value function and the optimal reset boundary, respectively.
\vs{2pt}

Completed details of the analysis above go beyond our goals in this paper and for this reason
will be omitted. It should be noted however that one of the main issues which makes this
analysis quite complicated (in comparison with e.g. the American put option problem)
is that the proof of monotonicity in time of $b_1$ is still an open question for certain values of parameters. This property has been stated in previous papers on reset and shout options
but actually have not been proven apart from \cite{Dai-1}, where PDE arguments were used to prove the monotonicity for some occasions. We provide the  probabilistic proof for the same situations, i.e., when $h_1$ is decreasing on $[0,T)$ and
 $r\le \sigma^2/2$. In this case, we can easily see from \eqref{Tanaka-1} that both integral terms are decreasing in $t$ and obtain that $\partial V_1/\partial t\le \partial G_1/\partial t$
 and therefore $b_1$ is decreasing. In the opposite case, when $h_1$ be increasing on $[0,t_*)$ if such $t_*>0$ exists or $r>\sigma^2/2$,
 both integrals on  \eqref{Tanaka-1} are not decreasing at the same time and it is not clear how to compare them.
 However, numerical results clearly support the hypothesis that $b_1$ is decreasing, and it was also obtained numerically in other papers too.
\vs{2pt}

Another difficulty of the analysis is that the proof of the continuity of the free boundary without having its
monotonicity is a challenging problem, which can help to tackle other optimal stopping problems.
In the next paragraph we will derive simpler equations
which characterize $V_1$ and $b_1$ uniquely and can be used for the pricing of the reset put option and computation of the optimal reset boundary.

\vs{6pt}

5. We now provide the reset premium representation formula for the arbitrage-free price $V_1$ which decomposes it into the sum of the European put option price and the `reset' premium. The optimal reset boundary $b_1$ will be obtained as the unique solution to the integral equation.

We will make use of the following function in Theorem \ref{th:1} below:
\begin{align} \label{L} \hs{1pc}
L_1(t,u,x,z)&=- e^{-r(u-t)}\EE_{t,x} \left[h_1(u)X_{u} I(X_{u} \ge z)\right]\\
&=-xe^{-\delta(u-t)} h_1(u)\wh{\QQ} (X_{u} \ge z)\nonumber\\
&=-xe^{-\delta(u-t)} h_1(u)\Phi\left[\left(\log(x/z)+(r\m\delta\p\sigma^2/2)(u-t)\right)/(\sigma\sqrt{u-t})\right]\nonumber
 \end{align}
 for $t,u\ge 0$ and $x,z>0$ and where the measure $\wh{\QQ}$ is defined as $d\wh{\QQ}=e^{\sigma B_T-\sigma^2 T/2}\;d\QQ$ and we used that $\wh{B}_t=B_t+\sigma t$ is a standard Brownian motion under
 $\wh{\QQ}$.
 \vs{6pt}

 The main result of this section may now be stated as follows.
\begin{theorem}\label{th:1}
 The optimal reset boundary $b_1$ can be characterised as the unique solution to the nonlinear integral equation
\begin{align}\label{th-2} \hs{3pc}
b_1(t) V^e(t)=\;V^e (t,b_1(t)) +\int_t^{T} L_1(t,u,b_1(t),b_1(u))du
\end{align}
for $t\in[0,T]$ in the class of continuous functions above $K$ with $b_1(T-)=K$. Then the value function $V_1$ in \eqref{problem-1-2} has the following representation
\begin{align}\label{th-1} \hs{3pc}
V_1(t,x)=\;V^e (t,x) +\int_t^{T}L_1(t,u,x,b_1(u))du
\end{align}
for $t\in[0,T]$ and $x\in (0,\infty)$, where $V_1$ is decomposed as the sum of the European put price $V^e$ and the reset premium.
\end{theorem}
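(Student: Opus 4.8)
\emph{Step 1: the reset premium representation and the equation for $b_1$.}
The plan is to apply the local time--space formula of \cite{Pe-1} to $s\mapsto e^{-r(s-t)}V_1(s,X_s)$ on $[t,T]$. On $C_1$ the value function is $C^{1,2}$ and solves $\partial_tV_1+\L_XV_1-rV_1=0$ by \eqref{PDE}; since $b_1>K$ on $[0,T)$, on $\mathrm{int}\,D_1\subset\{x>K\}$ we have $V_1=G_1$ and hence $\partial_tV_1+\L_XV_1-rV_1=H_1(t,x)=x\,h_1(t)$ by \eqref{H-4}--\eqref{H-5}. Because $b_1$ is continuous and the smooth fit \eqref{SF} holds, the local-time term of the formula along the curve $b_1$ vanishes. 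Taking $\EE_{t,x}$ (the martingale part drops out), using the terminal condition \eqref{TC} and $\EE_{t,x}[e^{-r(T-t)}(K-X_T)^+]=V^e(t,x)$, one gets
\[
V^e(t,x)=V_1(t,x)+\EE_{t,x}\Big[\int_t^Te^{-r(u-t)}h_1(u)\,X_u\,I\big(X_u\ge b_1(u)\big)\,du\Big].
\]
Evaluating the expectation under the measure $\wh\QQ$ exactly as in the definition \eqref{L} of $L_1$ and rearranging yields the representation \eqref{th-1}. Putting $x=b_1(t)$ in \eqref{th-1} and invoking the instant fit \eqref{IS} together with $G_1(t,b_1(t))=b_1(t)V^e(t)$ (from \eqref{G-1}, valid since $b_1(t)>K$) produces the integral equation \eqref{th-2}; the endpoint $b_1(T-)=K$ was established in Section 4.

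\emph{Step 2: uniqueness.}
Here I would run the four-step scheme of \cite{Pe-1,PS}. Let $c$ be any continuous function with $c>K$ on $[0,T)$, $c(T-)=K$, that solves \eqref{th-2}, and set $U^c(t,x):=V^e(t,x)+\int_t^TL_1(t,u,x,c(u))\,du$. For each fixed $u$ the function $(t,x)\mapsto e^{-r(u-t)}\EE_{t,x}[h_1(u)X_uI(X_u\ge c(u))]$ satisfies the backward equation $\partial_t+\L_X-r=0$ on $\{t<u\}$ and tends to $x\,h_1(u)I(x\ge c(u))$ as $t\uparrow u$; differentiating under the integral then shows $U^c$ is $C^{1,2}$ off the curve $c$ with $\partial_tU^c+\L_XU^c-rU^c=x\,h_1(t)\,I(x\ge c(t))$ there. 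Comparing the local time--space decomposition of $e^{-r(s-t)}U^c(s,X_s)$ with the identity $e^{-r(s-t)}U^c(s,X_s)=N_s+\int_t^se^{-r(u-t)}h_1(u)X_uI(X_u\ge c(u))\,du$ ($N$ a martingale, read off from the Markov property and \eqref{L}), the finite-variation parts must coincide, which forces the local-time term on $c$ to vanish, i.e. $U^c$ enjoys smooth fit across $c$. The four claims are then: (a) $U^c=G_1$ on $\{x\ge c(t)\}$ --- on the curve this is exactly \eqref{th-2} with $G_1(t,c(t))=c(t)V^e(t)$, while for $x>c(t)$ the difference $U^c-G_1$ solves the homogeneous equation on the open set $\{x>c(t)\}$, vanishes on $\{x=c(t)\}\cup\{t=T\}$ and grows at most linearly, hence vanishes there by a verification (optional-sampling) argument run up to the first exit of $X$ from $\{x>c(s)\}$; (b) $U^c\le V_1$ everywhere --- with $\tau_c:=\inf\{s\in[t,T]:X_s\ge c(s)\}$ the source term is $0$ on $[t,\tau_c)$ because $X<c$ there, so by (a), smooth fit and the local time--space formula $U^c(t,x)=\EE_{t,x}[e^{-r(\tau_c-t)}G_1(\tau_c,X_{\tau_c})]\le V_1(t,x)$; (c) the candidate reset set $\{x\ge c(t)\}$ is contained in $D_1$, i.e. $c\ge b_1$; (d) the reverse inclusion $D_1\subseteq\{x\ge c(t)\}$, i.e. $b_1\ge c$. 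Steps (c)--(d) give $c\equiv b_1$, and then $U^c=V_1$ by \eqref{th-1}, completing the proof.

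\emph{Where the work lies.}
Two points need care. First, Step 1 and the verification arguments in Step 2 presuppose the $C^{1,2}$-regularity of $V_1$ and the continuity of $b_1$ (and of any competitor $c$) needed for the local time--space formula; as the discussion in Section 4 indicates, continuity and monotonicity of $b_1$ in full generality are delicate, so I would either import them from the free-boundary analysis or bootstrap them a posteriori from \eqref{th-2}. Second, the inclusions (c)--(d) are where the reset payoff differs from the American put and require the most attention: the favourable sign available here is $H_1(t,x)=x\,h_1(t)\le0$, which holds on $\{x>K\}$ and at every time at which the reset set is nonempty (recall $b_1\equiv c\equiv+\infty$ on $[0,t^*)$ when $t^*$ exists, and $h_1\le0$ on $[t^*,T)$), so the comparison estimates have to be localised --- for $x<K$ one uses instead that $H_1\equiv0$ and $G_1=V^e$ there, together with the nonnegative local-time coefficient at $K$ appearing in \eqref{Tanaka-1}. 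Pinning the candidate reset set to coincide exactly with $D_1$ through this localised comparison is the main obstacle.
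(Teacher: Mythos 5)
Your proposal is correct and follows essentially the same route as the paper: the representation \eqref{th-1} and the equation \eqref{th-2} are obtained exactly as in part (A) of the paper's proof, via the local time--space formula of \cite{Pe-1}, the PDE on $C_1$, the identity $H_1(t,x)=x\,h_1(t)$ above $K$, smooth fit to kill the local-time term, and optional sampling. Your Step 2 is a faithful sketch of the standard four-step uniqueness scheme of \cite{Pe-4}, which the paper itself invokes by citation and omits, so there is nothing to fault beyond the (acknowledged) details of the comparison steps.
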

\begin{proof}

$(A)$ First we clearly have that the following conditions hold:
$(i)$ $V_1$ is $C^{1,2}$ on $C_1\cup D_1$;
$(ii)$ $b_1$ is of bounded variation;
 $(iii)$ $(V_1)_t+\L_{X} V_1- rV_1$ is locally bounded;
 $(iv)$ $x\mapsto V_1(t,x)$ is convex (recall paragraph 1 above);
$(v)$ $t\mapsto (V_1)_x (t,b_1(t)\pm)$ is continuous (recall \eqref{SF}). Hence we can apply the local time-space formula on curves \cite{Pe-1} for $e^{-r(s-t)}V_1( s,X_s)$:
\begin{align} \label{n-proof-1} \hs{1pc}
e^{-r(s-t)}V_1(&s,X_s)\\
=\;&V_1(t,x)+M_s\nonumber\\
 &+ \int_t^{s}
e^{-r(u-t)}\left(\tfrac{\partial V_1}{\partial t}  \p\L_X V_1\m rV_1\right)(u,X_u)
 I(X_u > b_1(u))du\nonumber\\
 &+\frac{1}{2}\int_t^{s}
e^{-r(u-t)}\left(\tfrac{\partial V_1}{\partial x}  (u,X_u +)-\tfrac{\partial V_1}{\partial x}  (u,X_u -)\right)I\big(X_u=b_1( u)\big)d\ell^{b_1}_u(X)\nonumber\\
  =\;&V_1(t,x)+M_s + \int_t^{s}e^{-r(u-t)}
\left(\tfrac{\partial G_1}{\partial t}  \p\L_X G_1\m G_1)( u,X_u\right)I(X_u \ge b_1(u))du\nonumber\\
  =\;&V_1(t,x)+M_s +\int_t^{s}e^{-r(u-t)}h_1(u) X_u I(X_u \ge b_1(u))du\nonumber
  \end{align}
where we used \eqref{PDE}, \eqref{H-4}, the smooth-fit condition \eqref{SF} and where $M=(M_u)_{u\ge 0}$ is the martingale part,  $(\ell^{b_1}_u(X))_{u\ge 0}$ is the local time process of $X$ spending at the boundary $b_1$. Now
upon letting $s=T$, taking the expectation $\EE_{t,x}$,  the optional sampling theorem for $M$, rearranging terms and noting that $V_1(T,x)=G_1(T,x)=(x\vee K\m x)^+=(K\m x)^+$ for all $x>0$, we get \eqref{th-1}.
The integral equation \eqref{th-2} one obtains by simply putting $x=b_1(t)$ into \eqref{th-1} and using \eqref{IS}.

\vs{6pt}

$(B)$ The proof of that $b_1$ is the unique solution to the equation \eqref{th-2} in the class of continuous functions $t\mapsto b_1(t)$
 is based on standard arguments originally derived in \cite{Pe-4} and omitted here.

\end{proof}
\vs{6pt}

 \section{Multiple reset put option}

 In this section we turn to the multiple reset put option problem
 \begin{equation} \label{problem-n-4} \hs{5pc}
V_n(t,x)=V_n(t,x;K)=\sup \limits_{t\le\tau\le T}\EE\left[ e^{-r(\tau-t)} G_n(\tau,X_{\tau})\right]
\end{equation}
 for $n\ge 1$, $0\le t\le T$ and $x,K>0$ where the gain function $G_n$ is given inductively
 \begin{equation} \label{payoff-n-4} \hs{5pc}
 G_n(t,x):=V_{n-1}(t,x;x\vee K)
 \end{equation}
 for $0\le t\le T$ and $x>0$. 
Having solved \eqref{problem-1-2} for $n=1$ in the previous section, we can derive the pricing formulas and integral equations for general $n\ge2$ by induction.

\begin{theorem}\label{th:n.1}
For any $n\ge 1$ the optimal stopping time in  \eqref{problem-n-4} is given by
\begin{align} \label{n-OST-2} \hs{5pc}
\tau^*_{n}=\inf\ \{\ t\leq s\leq T:X_{s}\ge b_n(s) \}
 \end{align}
 where the optimal reset boundary $b_n$ can be characterised as the unique solution to the nonlinear integral equation
\begin{align}\label{n-th-2} \hs{5pc}
b_n(t)(V^{e}(t)+p_{n-1}(t))=\;V^{e}(t,b(t)) +\int_t^T L_n(t,u,b_n(t),b_n(u))du
\end{align}
for $t\in[0,T]$ in the class of continuous functions above $K$ with $b_n(T-)=K$.
 The value function $V_n$ is characterized by the reset premium representation
\begin{align}\label{n-th-1} \hs{5pc}
V_n(t,x)=\;V^{e}(t,x) +\int_t^{T}L_n(t,u,x,b_n(u))du
\end{align}
for $t\in[0,T]$ and $x\in (0,\infty)$, and where
\begin{align}\label{n-th-3} \hs{3pc}
&L_n(t,u,x,z)=- x e^{-\delta(u-t)}h_n(u)\wh{\QQ} \left(X^x_{u-t} \ge z\right)\\
\label{n-th-4}&h_n(t)=(V^e (t)+p_{n-1}(t))'-\delta (V^e(t)+p_{n-1}(t))\\
\label{n-th-5}&p_n(t)=-\int_t^{T}e^{-\delta(u-t)}h_n(u)\wh{\QQ}_{t,K} \left(X_{u} \ge b_n(u;K)\right)du\\
\label{n-th-6}&p_0(t)\equiv 0.
\end{align}
Moreover, $V_n$ and $b_n$ solve the following free-boundary problem
\begin{align} \label{n-PDE} \hs{3pc}
&\frac{\partial V_n}{\partial t} \p\L_X V_n\m rV_n=0 &\hs{-30pt}\text{in}\;  C_n\\
\label{n-IS}&V_n(t,b_n(t))=G_n(t,b_n(t))=b_n(t) \left(V^e (t)+p_{n-1}(t)\right) &\hs{-30pt}\text{for}\; t\in[0,T)\\
\label{n-SF}&\frac{\partial V_n}{\partial x}(t,b_n(t))=\frac{\partial G_n}{\partial x}(t,b_n(t))=V^e (t)+p_{n-1}(t) &\hs{-30pt}\text{for}\; t\in[0,T) \\
\label{n-TC}&V_n(T,x)=(K\m x)^+ &\hs{-30pt}\text{for}\; x\in(0,\infty)\\
\label{n-FBP1}&V_n(t,x)>G_n(t,x) &\hs{-30pt}\text{in}\; C_n\\
\label{n-FBP2}&V_n(t,x)=G_n(t,x) &\hs{-30pt}\text{in}\; D_n
\end{align}
where the continuation set $C_n$ and the stopping set $D_n$ are given by
\begin{align} \label{C-1} \hs{3pc}
&C_n= \{\, (t,x)\in[T_0,T_n)\! \times\! (0,\infty):x<b_n(t)\, \} \\[3pt]
 \label{D-1}&D_n= \{\, (t,x)\in[T_0,T_n)\! \times\! (0,\infty):x\ge b_n(t)\, \}.
 \end{align}

\end{theorem}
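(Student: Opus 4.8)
The plan is to argue by induction on $n$, using the single-reset result (Theorem \ref{th:1}) as the base case $n=1$, where $p_0\equiv 0$ makes \eqref{n-th-2}--\eqref{n-th-5} collapse exactly to \eqref{th-2}--\eqref{L}. For the inductive step, suppose the statement holds for $n-1$; in particular $V_{n-1}(t,x;K)$ admits the representation \eqref{n-th-1} with boundary $b_{n-1}$, and from this one reads off, by the same computation as in paragraph 1 of Section 4, the key properties of the gain function $G_n(t,x)=V_{n-1}(t,x;x\vee K)$. First I would show that $G_n$ simplifies on $\{x>K\}$ to $G_n(t,x)=x\,(V^e(t)+p_{n-1}(t))$: this follows from the scaling relation $V_{n-1}(t,\lambda x;\lambda K)=\lambda V_{n-1}(t,x;K)$ (which is inherited inductively from the homogeneity of degree one of the payoff \eqref{payoff-n}), so that for $x>K$ we get $G_n(t,x)=V_{n-1}(t,x;x)=x\,V_{n-1}(t,1;1)=x(V^e(t)+p_{n-1}(t))$ by \eqref{n-th-1} evaluated at $x=1$, $K=1$, which is precisely the definition of $p_{n-1}$ in \eqref{n-th-5}. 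On $\{0<x\le K\}$ we have $G_n(t,x)=V_{n-1}(t,x;K)$, which is convex in $x$ by the induction hypothesis, and linear on $\{x>K\}$; the smooth-fit-type inequality $(G_n)_x(t,K+)\ge (G_n)_x(t,K-)$ is checked exactly as in \eqref{G-3}--\eqref{G-4} using $(G_n)_x(t,K+)=V^e(t)+p_{n-1}(t)$ and $(G_n)_x(t,K-)=(V_{n-1})_x(t,K;K)$. Hence $x\mapsto G_n(t,x)$ is convex on $(0,\infty)$, and therefore so is $x\mapsto V_n(t,x)$.

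Next I would compute $H_n:=\partial_t G_n+\L_X G_n-rG_n$. On $\{x<K\}$, $G_n=V_{n-1}(\cdot;K)$ solves the Black--Scholes PDE in its own continuation region and equals its gain below $b_{n-1}$; since $b_{n-1}>K$, the point $(t,x)$ with $x<K$ lies in $C_{n-1}$, so $H_n=0$ there. On $\{x>K\}$, using $G_n(t,x)=x(V^e(t)+p_{n-1}(t))$ and $\L_X$ acting on the linear function $x$, one gets $H_n(t,x)=x\,h_n(t)$ with $h_n$ as in \eqref{n-th-4}; the local time term of the Itô--Tanaka expansion across $\{x=K\}$ has nonnegative coefficient $(G_n)_x(t,K+)-(G_n)_x(t,K-)\ge 0$, so — exactly as in paragraph 3 of Section 4 — it is never optimal to reset below $K$ and, by convexity of $V_n$ and linearity of $G_n$ above $K$, the stopping set $D_n$ is of threshold type, giving a boundary $b_n(t)>K$ and the stopping time \eqref{n-OST-2}. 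The behaviour $b_n(T-)=K$ follows because near $T$ the (possibly negative) drift $h_n$ dominates and the local-time gain at $K$ cannot compensate, again mirroring the $n=1$ argument.

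With the structure of $D_n$ in hand, I would apply the local time-space formula on curves \cite{Pe-1} to $e^{-r(s-t)}V_n(s,X_s)$, exactly as in part $(A)$ of the proof of Theorem \ref{th:1}: the conditions $(i)$--$(v)$ (regularity of $V_n$, bounded variation of $b_n$, local boundedness of $\partial_t V_n+\L_X V_n-rV_n$, convexity of $V_n$, continuity of $t\mapsto(V_n)_x(t,b_n(t)\pm)$) hold by the induction hypothesis and the free-boundary problem \eqref{n-PDE}--\eqref{n-FBP2}. The surface term vanishes by the smooth-fit condition \eqref{n-SF}, the PDE kills the bulk term inside $C_n$, and in $D_n$ the integrand equals $H_n(u,X_u)=X_u h_n(u)$; setting $s=T$, taking $\EE_{t,x}$, using optional sampling for the martingale part and $V_n(T,x)=(K-x)^+$, and recognizing the resulting expectation via the measure change $\wh{\QQ}$ as in \eqref{L}, one obtains \eqref{n-th-1} with $L_n$ as in \eqref{n-th-3}. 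Putting $x=b_n(t)$ and using \eqref{n-IS} gives \eqref{n-th-2}; evaluating \eqref{n-th-1} at $x=1$, $K=1$ reproduces the recursion \eqref{n-th-5} for $p_n$, closing the induction. Uniqueness of $b_n$ in the class of continuous functions above $K$ with $b_n(T-)=K$ follows from the same argument as in \cite{Pe-4} (also used in part $(B)$ of Theorem \ref{th:1}): one shows any such solution must coincide with the value function's boundary by first proving the associated candidate value function dominates the gain, then that it cannot exceed $V_n$.

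The main obstacle I anticipate is the same one flagged by the authors for $n=1$: establishing enough regularity of $b_n$ (bounded variation, and ideally continuity) to legitimately invoke the local time-space formula, without having monotonicity of $b_n$ available in general. In the induction this is compounded because $h_n$ now depends on $p_{n-1}$, whose sign and monotonicity are themselves only controlled inductively; one needs to verify that the qualitative picture (existence of the threshold $t^*$, or its absence for typical parameters, and the decreasing behaviour of the relevant integrands when $r\le\sigma^2/2$) propagates from $n-1$ to $n$. The cleanest route is to carry the regularity of $b_n$ as part of the induction hypothesis, deriving it from the free-boundary problem \eqref{n-PDE}--\eqref{n-FBP2} (whose well-posedness is asserted, in analogy with the $n=1$ case, and whose detailed verification the paper explicitly sets aside), so that the probabilistic machinery applies verbatim at each stage.
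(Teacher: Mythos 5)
Your proposal is correct and follows essentially the same route as the paper: induction on $n$, with $G_n(t,x)=x(V^e(t)+p_{n-1}(t))$ for $x>K$ obtained from the inductive representation of $V_{n-1}$ (the paper substitutes $K=x$ into \eqref{n-th-1} and identifies the integral with $p_{n-1}$, which implicitly uses the same strike-homogeneity you make explicit), then It\^o--Tanaka to locate $D_n$ above $K$, the local time-space formula applied to $e^{-r(s-t)}V_n(s,X_s)$ to get \eqref{n-th-1}, the substitution $x=b_n(t)$ for \eqref{n-th-2}, and uniqueness via \cite{Pe-4}. The obstacles you flag (regularity/monotonicity of $b_n$ and the propagation of the qualitative behaviour of $h_n$ through the induction) are exactly the points the paper itself leaves unproved.
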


\begin{proof}

1. The proof is followed by induction. In the previous section we verified the statement of theorem for $n=1,$ so now we assume that the formulas hold for $n-1$. In what follows we prove   \eqref{n-th-2} and \eqref{n-th-1} for $n$. The proof will follow along the same lines as for $n=1$. 

We have that the payoff function $G_n$
 is
\begin{align} \label{n-G-1} \hs{3pc}
G_n(t,x)=V_{n-1}(t,x;K)
\end{align}
for $0<x\le K$ and thus the local benefits of waiting
\begin{align} \label{n-H-1} \hs{3pc}
H_n(t,x):=\left(\frac{\partial G_n}{\partial t} \p \L_X G_n\m rG_n\right)(t,x)=0 \quad \text{on}\quad [0,T)\times(0,K)
\end{align}
where we used \eqref{n-PDE} for $n-1$. Now if $x>K$, using the assumption of induction we obtain
\begin{align} \label{n-G-2} \hs{3pc}
G_n(t,x)&=V_{n-1}(t,x;x)\\
&=V^{e}(t,x;x)\nonumber
+\int_t^{T}L_{n-1}(t,u,x,b_n(u;x))du\\
&=x \left(V^e (t)-\int_t^{T}e^{-\delta(u-t)} h_{n-1}(u)  \wh{\QQ} (X^x_{u-t} \ge b_{n-1}(u;x))du\right)\nonumber\\
&=x \left(V^e (t)+p_{n-1}(t)\right)\nonumber
\end{align}
so that
  \begin{align} \label{H-4} \hs{3pc}
H_n(t,x)=x\left(V^e (t)+p_{n-1}(t) \right)'-\delta x (V^e(t)+p_{n-1}(t))=x h_n(t)
\end{align}
for $(t,x)\in[0,T)\times(K,\infty)$.
\vs{6pt}

2. Now by applying Ito-Tanaka's formula for $e^{-r(s-t)}G_n(s,X_s)$ we have
\begin{align} \label{n-Tanaka-1} \hs{1pc}
 \EE_{t,x}&\left[ e^{-r(\tau-t)}G_n(\tau,X_\tau)\right]\\
 =&\;G_n(t,x)+\EE_{t,x}\left[ \int_t^\tau e^{-r(s-t)}H_n( s,X_s)I(X_s \ge K)ds\right]\nonumber\\
 &+\frac{1}{2} \EE_{t,x}\left[ \int_t^\tau e^{-r(s-t)}\left(\frac{\partial G_n}{\partial x} (s,K+)\m \frac{\partial G_n}{\partial x} ( s,K-)\right)d\ell^{K}_s (X)\right]\nonumber
 \end{align}
for $(t,x)\in[0,T)\times(0,\infty)$ and any stopping time $\tau$ of $X$.
 The formula \eqref{n-Tanaka-1} shows that for $X_t<K$ it is not optimal to reset at once.
\vs{2pt}

Using similar arguments as for $n=1$ we can deduce that
there exists an optimal reset boundary $b_n:[0,T]\rightarrow \R$ such that
\begin{align} \label{n-OST-2} \hs{5pc}
&\tau_{n}^*=\inf\ \{\ t\leq s\leq T:X_{s}\ge b_n(s) \}
 \end{align}
is optimal in \eqref{problem-n-4} and $b_n(t)>K$ for $t\in[0,T)$ with $b_n(T-)=K$.
\vs{6pt}

3. Using standard Markovian arguments one can show that $V_n$ and the optimal reset boundary $b_n$ uniquely solve the
free-boundary problem \eqref{n-PDE}-\eqref{n-FBP2}. As for the single reset option problem in previous section, the main challenge here
is to prove the monotonicity of the boundary $b_n$. We note that due to complicated expression for $h_n$, it is even more difficult to prove this property rather than for $n=1$, where we could show it for some cases. However numerical analysis in the next section and results
in \cite{Kwok-2} supports the claim that all boundaries are monotone. 
\vs{2pt}

We now apply the local time-space formula on curves \cite{Pe-1} for $e^{-rs}V_n(t\p s,X^x_s)$:
\begin{align} \label{n-proof-1} \hs{1pc}
e^{-r(s-t)}V_n(& s,X_s)\\
=\;&V_n(t,x)+M_s\nonumber\\
 &+ \int_t^{s}
e^{-r(u-t)}\left(\tfrac{\partial V_n}{\partial t}  \p\L_X V_n\m rV_n\right)(u,X_u)
 I(X_u \ge b_n(u))du\nonumber\\
 &+\frac{1}{2}\int_t^{s}
e^{-r(u-t)}\left(\tfrac{\partial V_n}{\partial x}  (u,X_u +)-\tfrac{\partial V_n}{\partial x}  (u,X_u -)\right)I\big(X_u=b_n( u)\big)d\ell^{b_n}_u(X)\nonumber\\
  =\;&V_n(t,x)+M_s + \int_t^{s}e^{-r(u-t)}
\left(\tfrac{\partial G_n}{\partial t}  \p\L_X G_n\m G_n)( u,X_u\right)I(X_u \ge b_n(u))du\nonumber\\
  =\;&V_n(t,x)+M_s +\int_t^{s}e^{-r(u-t)}h_n(u) X_u I(X_u \ge b_n(u))du\nonumber
  \end{align}
where we used \eqref{n-PDE}, \eqref{H-4}, the smooth-fit condition \eqref{n-SF} and where $M=(M_u)_{u\ge 0}$ is the martingale part,  $(\ell^{b_n}_u(X))_{u\ge 0}$ is the local time process of $X$ spending at the boundary $b_n$. Now
upon letting $s=T$, taking the expectation $\EE_{t,x}$,  the optional sampling theorem for $M$, rearranging terms, noting that $V_n(T,x)=(K\m x)^+$ for all $x>0$ and
recalling \eqref{n-th-3}, we finally get \eqref{n-th-1}.
 One obtains The integral equation \eqref{n-th-2} by simply putting $x=b_n(t)$ into \eqref{n-th-1} and using \eqref{n-IS}.
\end{proof}

\vs{6pt}

\section{Numerical algorithm and implementation}

In this section we provide a numerical algorithm to compute prices of reset put option and optimal reset boundaries. Also we
explain how to use these results to provide the sequential reset strategies for the buyer, and finally we plot numerically the multiple optimal reset boundaries and
the option prices.

\begin{figure}[t]
\begin{center}
\includegraphics[scale=1]{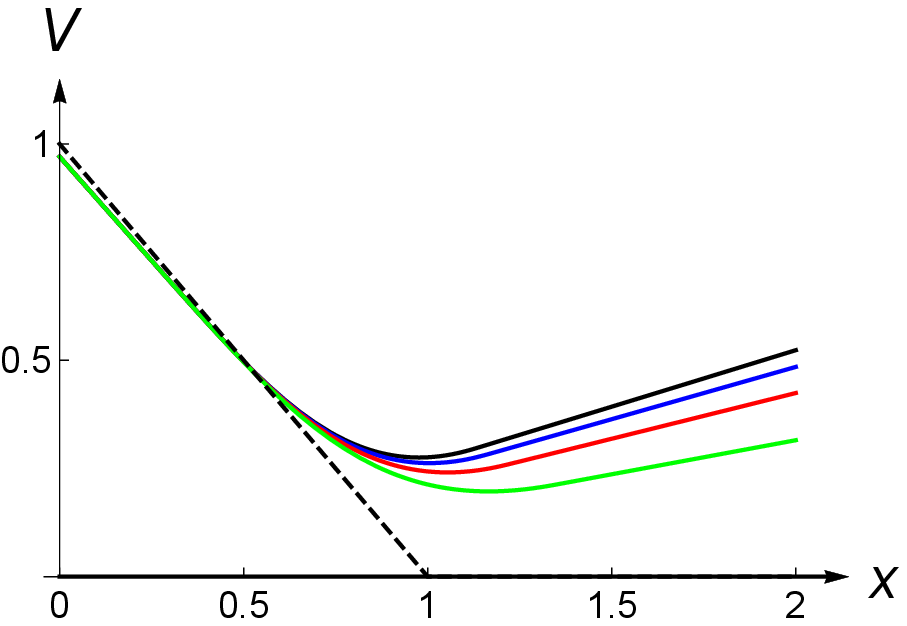}
\end{center}

{\par \leftskip=1.6cm \rightskip=1.6cm \small \ni \vs{-10pt}

\textbf{Figure 1.} This figure plots the prices of multiple reset put options \eqref{problem-n-4} for $n=1$ (green line), $n=2$ (red),
$n=3$ (blue) and $n=4$ (black). The parameter set is $T=1$, $K=1$, $r=0.03$, $\delta=0.04$, $\sigma=0.4$.

\par} \vs{10pt}

\end{figure}

\begin{figure}[t]
\begin{center}
\includegraphics[scale=1]{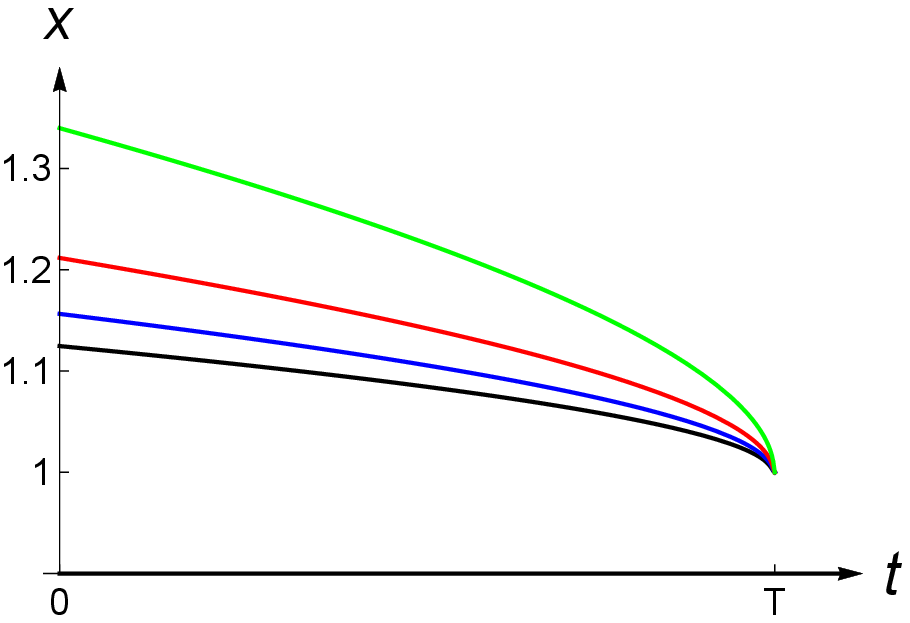}
\end{center}

{\par \leftskip=1.6cm \rightskip=1.6cm \small \ni \vs{-10pt}

\textbf{Figure 2.} This figure plots the optimal reset boundaries $b_n$ in \eqref{problem-n-4} for $n=1$ (green line), $n=2$ (red),
$n=3$ (blue) and $n=4$ (black). The parameter set is $T=1, K=1, r=0.03, \delta=0.04, \sigma=0.4$. 

\par} \vs{10pt}

\end{figure}

1. Now we describe the numerical solution to multiple reset put option problem \eqref{problem-n-4} and provide step-by-step algorithm.
\vs{2pt}

\emph{Step 1}. We calculate the function $h_1$ from \eqref{H-5}.
\vs{2pt}

\emph{Step 2}. Next, we solve numerically
the integral equation \eqref{th-2}.
Let us set $t_k=k\Delta$ for $k=0,1,...,N$ and $N$ large enough where $\Delta=T/N$ so that
the following discrete approximation of the integral equation  \eqref{th-2} is valid
\begin{align}\label{Alg-1} \hs{2pc}
G_1(t_k,b_1(t_k))=V^e (t_k,b_1(t_k)) +\Delta\sum_{l=k}^{N-1} L_1(t_k,t_{l+1},b_1(t_k),b_1(t_{l+1}))
\end{align}
for $k= 0,1,...,N\m1$. Setting $k=N\m1$ and $b_1(t_N)=K$ we can solve the algebraic equation
\eqref{Alg-1} numerically and get number $b_1(t_{N-1})$. Setting $k=N\m2$ and using the values $b_1(t_{N- 1})$,
$b_1(t_{N})$, we can solve \eqref{Alg-1} numerically and get number $b_1(t_{N- 2})$.
Continuing the recursion we obtain $b_1(t_{N-1}),\ldots,b_1(t_1),b_1(t_0)$  as approximations of the optimal boundary $b_1$ at the
points $T- \Delta,\ldots, \Delta, 0$.
We can then interpolate $b_1(\cdot)$ between points $(t_{k-1},t_k)$, $k=1,...,N$.
\vs{2pt}

\emph{Step 3}. The value function \eqref{th-1} can be approximated as follows
\begin{align}\label{Alg-2} \hs{2pc}
V_1(t_k,x)=V^e (t_k,x) +\Delta\sum_{l=k}^{N-1} L_1(t_k,t_{l+1},x,b_1(t_{l+1}))
\end{align}
for $k= 0,1,...,N\m 1$ and $x>0$, then interpolate $V_1(\cdot,x)$ between points $(t_{k-1},t_k)$, $k=1,...,N$.
\vs{2pt}

\emph{Step 4}. We approximate $p_1$ from \eqref{n-th-5} as
\begin{align}\label{Alg-2} \hs{5pc}
p_1(t_k)=-\Delta \sum_{l=k}^{N-1} e^{-\delta(t_{l+1}-t_k)} h_1(t_{l+1})\wh{\QQ}_{t_k,K} \left(X_{t_{l+1}} \ge b_1(t_{l+1})\right)
\end{align}
for $k= 0,1,...,N\m 1$ and interpolate between points $(t_{k-1},t_k)$, $k=1,...,N$.  We can then approximate $p'_1$.
\vs{2pt}

\emph{Step 5}. We compute the function $h_2$ using \eqref{n-th-4}.
\vs{2pt}

\emph{Step 6}. Now we solve \eqref{n-th-2} for $b_2$ using the same way as in \emph{Step 2} above. All components in the equation \eqref{n-th-2}
have been computed in previous steps.
\vs{2pt}

\emph{Step 7}. We calculate $V_2$ using \eqref{n-th-1}.
\vs{2pt}

\emph{Step 8}. We compute $p_2$ and $p'_2$ using \eqref{n-th-5} .
\vs{2pt}

\emph{Step 9}. We then continue the same procedure as above and compute $h_3 \rightarrow b_3 \rightarrow V_3 \rightarrow (p_3, p_3') \rightarrow \ldots
\rightarrow h_n \rightarrow b_n \rightarrow V_n \rightarrow (p_n, p_n')$.
\vs{6pt}

2. We note that the algorithm above is easy to implement, convergent as $N$ increases, and computationally quite fast.
In \cite{Kwok-1} the numerical comparison of recursive integral approach with binomial method has been provided for the single reset put option, and it was mentioned that for a given numerical accuracy, the former method is faster than the latter. This observation should be more apparent for multiple reset options studied in the current paper. 
\vs{2pt}

Figures 1 and 2 display the sequences of  the  option prices and  the optimal reset boundaries, respectively, for $n=4$. 
Figure 1 confirms that the reset option prices increase in the number of remaining reset rights $n$, which is an obvious fact. Also it shows $U$-shaped 
option prices for all $n\ge 1$ unlike the European put option price which is decreasing in $x$.
Figure 2 provides an evidence that for fixed $n$ the optimal reset boundaries are decreasing in $t$ and that for fixed $t\in[0,T)$ the boundaries 
are decreasing in $n$. 
\vs{2pt}

At last, we point out the following issue. When we obtain the sequence of the boundaries
$(b_k(\cdot))_{k=1}^n$ for the reset put option with $n$ rights, for the sake of unity we use the same strike level $K$. However,
using the optimal multiple reset policy the buyer will have different strike for each reset right, i.e.,
for $i$-th right the strike is $K_i=X_{\tau_{i-1}}$, $i=2,...,n$ and $K_1=K$. Therefore when the holder has $n$ remaining rights and resets at time $t$,
he immediately obtains  the reset put option with $n-1$ rights and the strike $X_t$, and thus solves the integral equation for next right with $K=X_t$.

\vs{12pt}

\begin{center}

\end{center}

\vs{16pt}

\end{document}